\DeclareMathAlphabet{\pazocal}{OMS}{zplm}{m}{n}
\let\bbordermatrix\bordermatrix
\patchcmd{\bbordermatrix}{8.75}{4.75}{}{}
\patchcmd{\bbordermatrix}{\left(}{\left[}{}{}
\patchcmd{\bbordermatrix}{\right)}{\right]}{}{}
\newcommand{\sr}{\stackrel}
\newcommand{\rar}{\rightarrow}
\newcommand{\tri}{\sr{\triangle}{=}}
\newcommand{\be}{\begin{equation}}
\newcommand{\ee}{\end{equation}}
\newcommand{\bea}{\begin{eqnarray}}
\newcommand{\eea}{\end{eqnarray}}
\newcommand{\bes}{\begin{eqnarray*}}
\newcommand{\ees}{\end{eqnarray*}}
\newcommand{\bce}{\begin{center}}
\newcommand{\ece}{\end{center}}
\newcommand{\beae}{\begin{IEEEeqnarray}{rCl}}
\newcommand{\eeae}{\end{IEEEeqnarray}}
\def\VR{\kern-\arraycolsep\strut\vrule &\kern-\arraycolsep}
\def\vr{\kern-\arraycolsep & \kern-\arraycolsep}
\newcommand{\ben}{\begin{enumerate}}
\newcommand{\een}{\end{enumerate}}
\newcommand{\hso}{\hspace{.1in}}
\newcommand{\hst}{\hspace{.2in}}
\newcommand{\noi}{\noindent}
\newtheorem{theorem}{Theorem}[section]
\newtheorem{problem}{Problem}[section]
\newtheorem{remark}{Remark}[section]
\newtheorem{definition}{Definition}[section]
\providecommand{\customgenericname}{}
\newcommand{\newcustomtheorem}[2]{%
  \newenvironment{#1}[1]
  {%
   \renewcommand\customgenericname{#2}%
   \renewcommand\theinnercustomgeneric{##1}%
   \innercustomgeneric
  }
  {\endinnercustomgeneric}
}
\begin{document}



\title{Feedback  Capacity Formulas of AGN  Channels Driven by  Nonstationary   Autoregressive Moving Average Noise}

\author{%
   \IEEEauthorblockN{Stelios Louka,    
   Christos Kourtellaris and Charalambos D. Charalambous                  }
   \IEEEauthorblockA{%
                     Department of Electrical and Computer Engineering\\University of Cyprus, 75 Kallipoleos Avenue, P.O. Box 20537, Nicosia, 1678, Cyprus, \\ 
 \{slouka01,kourtellaris.christos,chadcha@ucy.ac.cy\}}}

\maketitle

\begin{abstract}
 In this paper we derive closed-form formulas  of feedback capacity  and nonfeedback achievable rates,  for  Additive Gaussian Noise (AGN) channels driven by nonstationary autoregressive moving average (ARMA) noise (with unstable one poles and zeros),  based on  time-invariant feedback codes and channel input distributions. 
From the analysis and simulations follows the  surprising observations, (i) the use of time-invariant channel input distributions gives rise to multiple regimes of capacity that depend on  the   parameters of the ARMA noise, which may or may not use feedback,  (ii) the more unstable the pole (resp. zero) of the ARMA noise the higher (resp. lower) the feedback capacity,  (iii) certain conditions, known as detectability and stabilizability  are necessary and sufficient to ensure the feedback capacity formulas and nonfeedback achievable rates {\it are independent of the initial state of the ARMA noise}.   Another surprizing observation is that  Kim's \cite{kim2010} characterization of feedback capacity which is developed for  stable ARMA noise, if applied to the  unstable ARMA noise, gives a lower value of feedback capacity compared to our feedback capacity formula. 
\end{abstract}



\section{Introduction}
\label{sect:intro}
The AGN channel is defined by 
\begin{align}
&Y_t=X_t+V_t, \hso t=1, \ldots, n, \hso  \frac{1}{n}  {\bf E} \Big\{\sum_{t=1}^{n} (X_t)^2\Big\} \leq \kappa \label{AGN_in}
\end{align}
where   $\kappa \in [0,\infty)$ is the total  power of the transmitter,      $X^n = \{X_1, X_2, \ldots, X_n\}$,   $Y^n = \{Y_1, Y_2, \ldots, Y_n \}$ and $V^n = \{ V_1, \ldots, V_n\}$,    are   the sequences of channel input, channel   output, and Gaussian noise  random variables (RVs), respectively. \\
The feedback and nonfeedback capacity of the AGN channel, when the noise $V^n$ is stable, stationary, or asymptotically stationary,  can be considered to have been explained sufficiently in information theory \cite{gallager1968,cover-pombra1989}. The most general  is the Cover and Pombra  formulation and coding theorems \cite[Theorem~1]{cover-pombra1989},  for  the set of uniformly distributed messages $W : \Omega \rar  {\cal M}^{(n)} \tri  \left\{1, 2,\ldots, \lceil M_n \rceil\right\}$, codewords  of block length $n$, $X_1=e_1(W),\ldots,  X_n=e_n(W,X^{n-1}, Y^{n-1})$,   decoder functions,  $y^n \longmapsto d_{n}(y^n)\in  {\cal M}^{(n)}$, with average  error  probability
\bea
{\bf P}_{error}^{(n)} =  \frac{1}{\lceil M_n \rceil} \sum_{w=1}^{\lceil M_n \rceil} {\bf  P}\Big(d_n(Y^n) \neq W\Big|W=w\Big). \label{g_cp_4}
\eea

 The objective of this paper is twofold. 

1)  To show that feedback and nonfeedback capacity formulas and achievable rates, may behave very different, depending on the definitions of achievable rates, in particular, whether conditions are imposed to ensure these rates are insensitive to  initial states or distributions of the channel, i.e., of  $V^n$.

2) To show, the  surprizing result that,  the consideration of an unstable noise $V^n$ alters  the mathematical formulas of  feedback and nonfeedback capacity formulas and achievable rates, and that noises with unstable poles  give significant gains of achievable rates, at no extra expense of power. 

To keep the analysis simple,  we consider      the {\it unstable and stable},  autoregressive moving average,  unit memory noise, denoted by ARMA$(a,c), a \in (-\infty,\infty), c\in (-\infty,\infty),  c \neq a$, as defined    below. Versions of {\it stable or marginally stable},  ARMA$(a,c), a \in [-1,1], c\in [-1,1]$ noise are considered  since the early 1970's, in \cite{butman1969,tienan-schalkwijk1974,wolfowitz1975,butman1976,ozarow1990,yang-kavcic-tatikonda2007,kim2010},   where the reader may find bounds on achievable rates of feedback and nonfeedback codes, under various assumptions and formulations. 
\begin{align}
 &\mbox{ARMA$(a,c)$:} \hso  V_t=c V_{t-1}+ W_t - aW_{t-1}, \hso  t=1, \ldots, n, \label{ar_1}  \\
 & W_t \in N(0,K_{W}),\: K_W>0, \:   t=1, \ldots, n,\: \mbox{ mutually indep}, \label{ar_3}\\
 & \{W_1, \ldots, W_n\} \; \mbox{indep. of initial state $S\tri (V_0,W_0)$}, \\
 &V_0\in N(0, K_{V_0}), \hso   W_0\in N(0, K_{V_0}), \hso K_{V_0}\geq 0, \hso K_{W_0} \geq 0, \\
 &   a \in (-\infty,\infty), \hso c\in (-\infty,\infty), \hso c \neq a, \label{ar_4}
 \end{align}
where  the notation,  $Z\in N(0,K_{Z})$ means  $Z$  is a Gaussian RV,  with  zero mean, and  variance $K_Z$. The  $ARMA(a,c)$ noise  is  equivalently expressed in state form, with state  $S_t$ as,  
\begin{align}
&S_t \tri \frac{cV_{t-1}-aW_{t-1}}{c-a}, \hso t =1, \dots,n, \\
&S_{t+1} = cS_t+W_t, \hso S_1=s,  \hso t =1, \dots,n,\\
&V_t = (c-a)S_t +W_t, \hso t =1, \dots,n
\end{align}  
From the  $ARMA(a,c)$ noise, follow the  two  special cases, 
\begin{align}
&\mbox{Autoregressive: }AR(c)\big|_{a=0}:\hso V_t = cV_{t-1}+W_t \label{AR}\\
&\mbox{Moving Average: }MA(a)\big|_{c=0}:\hso V_t = W_t-aW_{t-1} \label{MA}
\end{align}


  \subsection{Literature Review } 
Due to the relevance to our investigation,  of prior formulas  found in  \cite{butman1969,tienan-schalkwijk1974,wolfowitz1975,butman1976,cover-pombra1989,ozarow1990,yang-kavcic-tatikonda2007,kim2010}, we briefly discuss some of these below, with emphasis on the  {\it formulations and  assumptions}.

3) {\it Formulation and Bounds  with Initial State \cite{tienan-schalkwijk1974,wolfowitz1975,butman1976}.} Wolfowitz \cite{wolfowitz1975} and Butman  \cite{butman1976}, derived a  lower bound on feedback capacity of the AGN channel, driven  by the noise,  AR$(c), c\in [-1,1]$, using the  linear feedback coding scheme, $X_t=g_t\Big(\Theta-{\bf E}\big\{\Theta\Big|Y^{t-1},v_0\big\}\Big), t=2, \ldots, n,  X_1=g_1\Theta$, where $g_t$ are nonrandom real numbers, and  $\Theta: \Omega \rar {\mathbb R}$ is Gaussian, $\Theta \in N(0,1)$, i.e.,  under the assumption the initial state $V_0=v_0$ is known to the encoder and decoder. The lower bound   is 
\begin{align}
&C^{LB}  
 =\frac{1}{2}\log \chi^2, \hso \mbox{where $\chi$ is the positive root of}  \label{butman_cs_2}\\
&\chi^4-\chi^2 - \frac{\kappa}{K_W} \Big(\chi+|c|\Big)^2=0, \; |c|\leq 1, \; K_W>0. \label{butman_cs_3}
\end{align}
Butman \cite{butman1976}   conjectured that $C^{LB}$ is the feedback capacity.\\ \cite{butman1976}  includes a comparison to the Tienan and Schalkwijk upper bound \cite{tienan-schalkwijk1974}, also derived under the same assumptions. 

\begin{figure}
\centering
  \includegraphics[width=\linewidth]{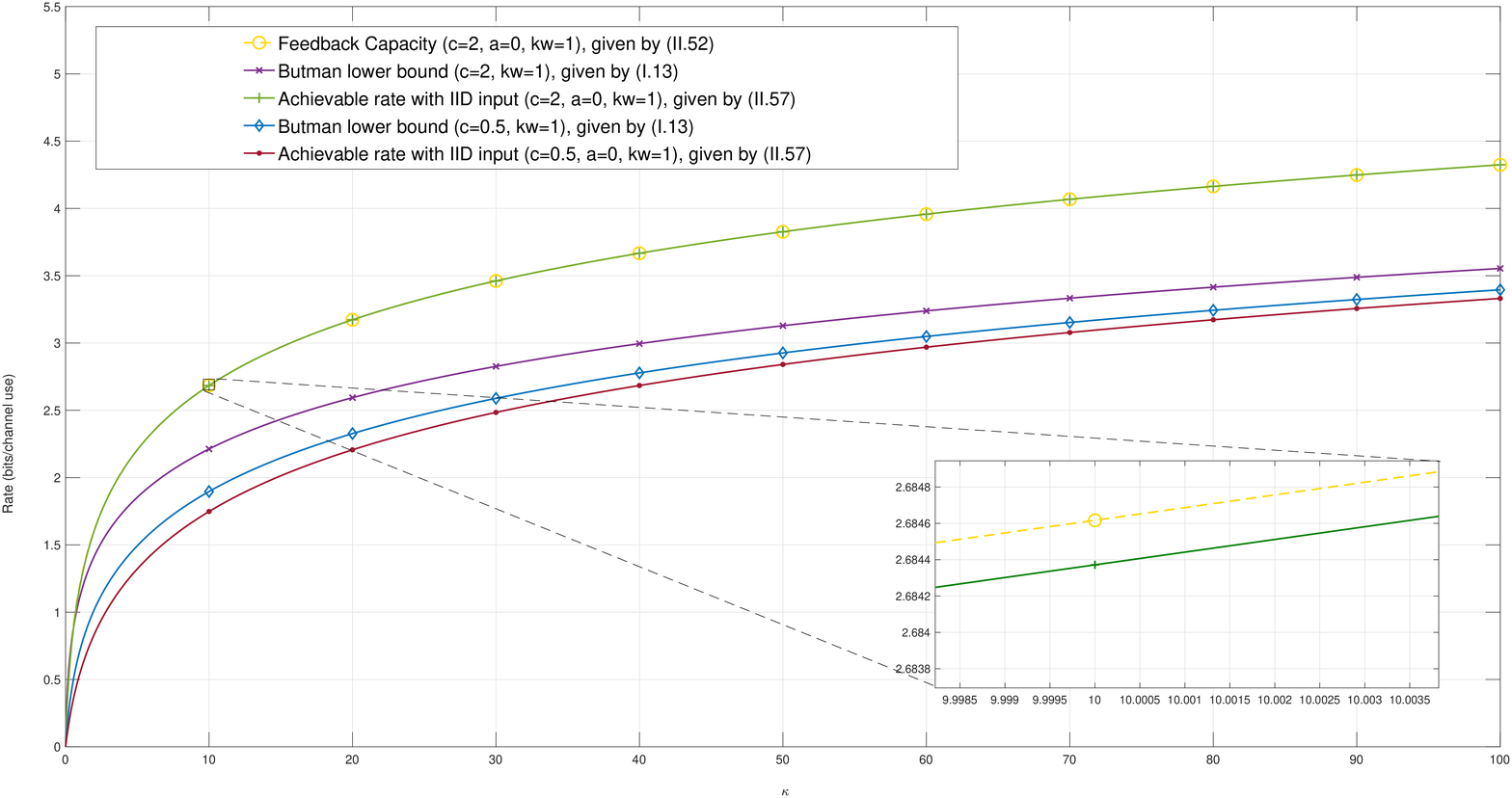}
   \vspace{-0.9cm}
  \caption{Comparison of $C_{FB}=C^{YKT}=C^{LB}$, i.e.,   Butman's  (\ref{butman_cs_2}), Kim's feedback capacity \cite[Theorem~6.1, $C_{FB}$]{kim2010},   and our feedback capacity and achievable nonfeedback rate  for unstable and stable AR$(c)$ noise}
  \label{figButman}
\end{figure}

{\it 4) Formulation,   Characterizations  and Closed-Form Formulas with and without Initial State  \cite{cover-pombra1989,yang-kavcic-tatikonda2007,kim2010}.}\\
{\it 4.1) Cover and Pombra \cite{cover-pombra1989}} characterized the feedback (and nonfeedback)  capacity of the AGN channel, driven by a nonstationary Gaussian noise $V^n$,   \cite[Theorem~1]{cover-pombra1989}, for codes that do not assume knowledge of the initial state of the noise. The feedback capacity is  $C \tri \lim_{n \longrightarrow \infty}  \frac{1}{n} C_n$ (provided the limit exists), where 
 $ C_n$ is the $n$-finite block  length or transmission feedback information ($n-$FTFI) capacity,  
 \begin{align}
 &C_n
  \tri  \sup_{ \frac{1}{n}   {\bf E} \big\{\sum_{t=1}^n (X_t)^2 \big\}\leq \kappa}  H(Y^n) - H(V^n )  \label{cp_11}\\
&  X_n=\sum_{j=1}^{n-1} \Gamma_{n,j}V_j +\overline{Z}_n, \hso  n=1, 2, \ldots, \hso X_1=\overline{Z}_1,  \label{cp_in_1} 
  \end{align}
  and the supremum is over  nonrandom  $\Gamma_{n,j}, j=1, \ldots, n-1$ and jointly correlated, Gaussian RVs $\{\overline{Z}_1, \ldots, \overline{Z}_n\}$, independent of $V^n$; $H(X)$ is the differential entropy of RV $X$.

\noi {\it 4.2) Yang, Kavcic and Tatikonda \cite{yang-kavcic-tatikonda2007}}  analyzed the feedback capacity of the AGN channel driven a noise $V^n$, with state $S^n$, under the following assumption.

{\it Assumption (YKT) \cite[page 933, I)-III)]{yang-kavcic-tatikonda2007}:   given the initial state of the noise $S_1=s$, which is known to the encoder and the decoder, the channel input $X^n\tri \{X_1, \ldots, X_n\}$ uniquely defines the state variables $S^n$ and vice-versa. }

\noi \cite[Theorem~7]{yang-kavcic-tatikonda2007},  computed  the feedback rate of the AGN channel driven by   ARMA$(a,c), a \in (-1,1), c\in (-1,1)$ noise,  using  the definition,   
\begin{align}
&C^{YKT} \tri  \sup_{(\Lambda, K_Z):   \lim_{n \longrightarrow \infty} \frac{1}{n} {\bf E}\big\{\sum_{t=1}^n \big(X_t\big)^2\big|S_1=s \big\}\leq \kappa} \lim_{n \longrightarrow \infty} \Big\{  \nonumber \\
&\hspace*{2.0cm}   \frac{1}{n} \sum_{t=1}^n H(Y_t|Y^{t-1}, s) - H(V^n|s)\Big\}, \label{ykt_ff}\\
&  X_n = \Lambda \Big(S_n - {\bf E}\Big\{S_n\Big|Y^{n-1}, S_1=s\Big\}\Big)+Z_n, \; n=2, \ldots, \label{YKT_input_ss_a}\\
&X_1=Z_1, \hso  Z_n \in N(0,K_{Z}),  \;  K_Z \geq0,  \hso   n=1,2, \ldots      \label{YKT_input_ss}
 \end{align} 
 where the RVs,  $\{Z_1, Z_2, \ldots, Z_n\}$ are mutually independent. 
The limiting  problem in the steady state,  is solved,  and 
 a formula of  $C^{YKT}$ is obtained. 
For the AR$(c), c\in (-1,1)$ noise,  $C^{YKT}=\frac{1}{2}\log |\Lambda^*|^2$, where  $\Lambda^*$  satisfies Butman's equation (\ref{butman_cs_3}),  i.e., $C^{YKT}=C^{LB}$ of Butman  (see \cite[Corollary~7.1]{yang-kavcic-tatikonda2007}).

\begin{remark}
\label{rem:zi}
In both \cite[Theorem~7,  Corollary~7.1]{yang-kavcic-tatikonda2007},     $C^{YKT}$, are  achieved  by $\Lambda=\Lambda^*$ and $K_Z=K_{Z}^*=0$.
\end{remark}

{\it 4.3) Kim \cite{kim2010} re-visited}  the AGN channel driven by a stable noise, and derived characterizations of feedback capacity,  in frequency domain \cite[Theorem~4.1]{kim2010}, with zero power spectral density  of the innovations  part of the channel input $X_n, n=1,2, \ldots$, and in  time-domain  \cite[Theorem~6.1]{kim2010}, with zero variance  of the innovations  part, $Z_n$  of the  input $X_n, n=1,2, \ldots$,  (\ref{YKT_input_ss_a}), (\ref{YKT_input_ss}).  Kim \cite[page 78, first paragraph]{kim2010}, stated  that \cite[Theorem~6.1, $C_{FB}$]{kim2010}) characterizes the feedback capacity (i.e., $C_{FB}$) as conjectured in  \cite[Theorem~6 and Conjecture~1]{yang-kavcic-tatikonda2007}, in  a simpler form (without the  innovations process of the input).\\
In  \cite[Page 76]{kim2010}, the AGN channel  with ARMA$(a,c), a \in (=[-1,1], c\in [-1,1]$ noise is considered, with initial state  $(V_0=0,W_0=0)$, and the  feedback capacity $C_{FB}$ is computed based on the limiting expression  (\ref{ykt_ff}), with   input  $X_n$, replaced by 
\begin{align}
&X_1= Z_1, \hst \mbox{$Z_n$ is zero mean, variance  $K_{Z}>0$}
 \label{kim_in_2}\\
&X_n=\Lambda  \Big( S_n-{\bf E}\Big\{S_n \Big|Y^{n-1}\Big\}\Big), \hst n=2,  \ldots  \label{kim_in_3}
\end{align}
 For the AR$(c), c \in (-1,1)$ noise, the optimal  $\Lambda=\Lambda^*$  satisfies Butman's equation (\ref{butman_cs_3}), and  $C_{FB}=\frac{1}{2}\log |\Lambda^*|^2$, i.e.,  identical to Butman's lower bound.

\subsection{Main Problem of the Paper: Brief Discussion of Results and Comparisons}
\label{sect:results}
As we briefly demonstrate via simulations of our feedback capacity expressions, which we derived in Section~\ref{sec:th},  that even for stable ARMA$(a,c), a \in (-1,1), c\in (-1,1)$ noise,  we arrive at completely different formulas, compared to those in \cite{yang-kavcic-tatikonda2007,kim2010}.  We show these  differences are attributed to our feedback capacity problem definition, stated as Problem~\ref{problem_1}. In particular,   the limiting expression of feedback rate,   (\ref{inter}),  is independent of the initial state of the noise, i.e., $C^{\infty}(\kappa,s)=C^{\infty}(\kappa), \forall s$, when    compared to (\ref{ykt_ff}).  

\begin{problem} 
\label{problem_1}
Consider  the AGN channel  driven by the   $ARMA(a,c)$ noise, with  $c\in (-\infty,\infty)$,  $a\in (-\infty,\infty)$, and with initial state $S_1=s$, known to the encoder and the decoder.  Define the feedback rate
\begin{align}
C^{\infty}(\kappa,s) \tri& \sup_{ \lim_{n \longrightarrow \infty} \frac{1}{n} {\bf E}\big\{\sum_{t=1}^n \big(X_t\big)^2\big|s \big\}\leq \kappa} \lim_{n \longrightarrow \infty} \Big\{ \frac{1}{n} \sum_{t=1}^n H(Y_t|Y^{t-1}, s) \nonumber \\
&- H(V^n|s)\Big\} \label{inter}
 \end{align}
 where the supremum is taken over all jointly Gaussian    channel input process $X_n, n=1,2,\ldots$, generated by time-invariant feedback strategies, and   induce distributions  ${\bf P}_{X_t|X^{t-1}, Y^{t-1}, S_1}, t=1,2,  \ldots$, such that  the joint process $(X_n,Y_n), n=1,2,\ldots,$  is jointly Gaussian, for $S_1=s$.\\
We address the following questions.\\
(a) What are  necessary and/or sufficient conditions for, \\
(i)  asymptotic stationarity of the process $(X^{n}, Y^{n}), n=1,2,\ldots$ or  the marginal process $X^{n}$, that achieve $C^\infty(\kappa,s)$, and \\
(ii)  $C^\infty(\kappa,s)=C^{\infty}(\kappa)\;  \forall s$, i.e.,  independent of initial data?\\
(b) What are the  closed form formulas of  feedback capacity $C^\infty(\kappa,s)=C^{\infty}(\kappa)\; \forall s$?
 \end{problem}

Problem~\ref{problem_1}.(a).(i),(ii), captures the requirement that $C^{\infty}(\kappa,s)$ is well-defined,  for unstable $ARMA(a,c), c\in (-\infty,\infty),  a\in (-\infty,\infty)$ noise (as well as stable,   $c\in (-1,1), a\in (-1,1)$), and $C^{\infty}(\kappa,s)=C^{\infty}(\kappa), \forall s$. \\
In the rest of the paper,  we show  there are  multiple regimes of $C^{\infty}(\kappa)$, which depend on the parameters  $(a,c,\kappa)$. At some  regimes,  feedback does not increase the capacity. This is attributed to the use of time-invariant channel input strategies. For these regimes  we derive achievable nonfeedback rates $C^{\infty,nfb}_{LB}(\kappa)$, based on a simple IID  channel input $X_n=Z_n, Z_n \in N(0, \kappa), n=1,2, \ldots$.\\
Fig.~\ref{figButman}, compares our feedback capacity and achievable nonfeedback rate to Butman's lower bound and Kim's feedback capacity \cite[Theorem~6.1, $C_{FB}$]{kim2010}, for the AR$(c)$, stable and unstable  noise. For unstable noise, even an IID channel input outperforms the feedback rate  $C_{FB}=C^{YKT}=C^{LB}$.\\
The verification of our objectives  described in Section~\ref{sect:intro}, under 1) and 2), are demonstrated in  Figures~\ref{fig1}, \ref{fig2}.
\begin{figure}
\center
  \includegraphics[width=\linewidth]{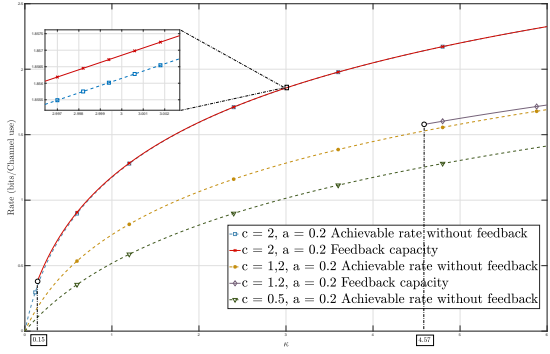}
  \vspace{-0.9cm}
  \caption{Feedback capacity $C^\infty(\kappa)$ for $\kappa \in {\cal K}^{\infty}(a,c,K_W)$ based on (\ref{sol_12}) and lower bound on nofeedback capacity $C_{LB}^{\infty,nfb}(\kappa)$ for  $\kappa \in (0,\infty)$ based on (\ref{nf_888}),    of the AGN channel driven by stable/unstable ARMA$(a,c)$ noise, for various values of $a=0.2$, $c\in (-\infty, \infty)$ and $K_W=1$.}
  \label{fig1}
  \vspace{-0.35cm}
\end{figure}
\begin{figure}
\centering
  \includegraphics[width=\linewidth]{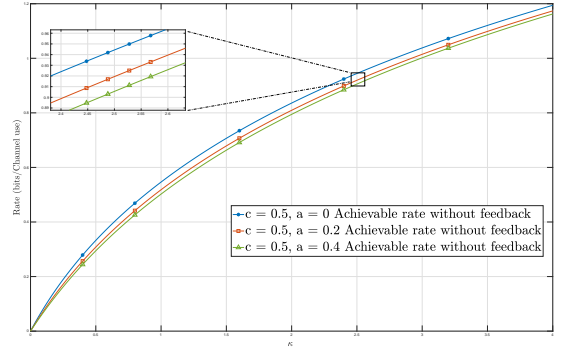}
   \vspace{-0.9cm}
  \caption{Lower bound achievable rate for no-feedback $C_{LB}^{\infty,nfb}(\kappa)$ for  $\kappa \in (0,\infty)$ based on (\ref{nf_888}), of the AGN channel driven by stable ARMA$(a,c)$ noise, for various values of $|a|\leq 1$, $c=0.5$ and $K_W=1$.}
  \label{fig2}
\end{figure}
Figure~\ref{fig1}, shows that  feedback capacity $C^{\infty}(\kappa)$ over the appropriate region, is an increasing function of the parameter $c$, i.e., the higher unstable pole, the higher the value of $C^{\infty}(\kappa)$. On the other hand, the higher the  value of $a$, i.e., of the zero,  the lower the value  $C^{\infty}(\kappa)$.\\
Figure~\ref{fig2} shows that the   lower bound on nofeedback capacity $C^{\infty,nfb}_{LB}(\kappa)$ is achievable for all $\kappa \in (0,\infty)$, for stable and unstable $ARMA(a,c)$ noise, and it is very closed  to the feedback capacity $C^{\infty}(\kappa)$.\\

\vspace{-0.5cm}
\section{Characterization of Feedback Capacity}\label{sec:th}
We consider a the set of uniformly distributed messages $W : \Omega \rar  {\cal M}^{(n)} \tri  \left\{1, 2,\ldots, \lceil M_n \rceil\right\}$,  codewords  $X_1=e_1(W,S_1),\ldots,  X_n=e_n(W,S_1, X^{n-1}, Y^{n-1})$,   decoder functions  $(s,y^n) \longmapsto d_{n}(s,y^n)\in  {\cal M}^{(n)}$, and  average  error  probability (\ref{g_cp_4}), which  is also conditioned on $S_1=s$.

\subsection{$n-$FTFI Capacity of  Time-Invariant Channel Input Strategies} 
 Since  our  code depends on the initial state of  ARMA$(a,c)$ noise, $S_1=s$, then  the entropies in  (\ref{cp_11}), (\ref{cp_in_1}), are conditional on $S_1=s$,  and $C_n^{fb}$ is denoted by  $C_n(\kappa,s)$,   and $X_n$ depends on $S_1=s$. 
Theorem~\ref{thm_FTFIi} is easily derived from the Cover and Pombra characterization (see \cite{charalambous-kourtellaris-louka:ARXIV-2020}) or  \cite{yang-kavcic-tatikonda2007}.

\begin{theorem}  Characterization of  $n-$FTFI Capacity \\
\label{thm_FTFIi}
Consider the AGN Channel driven by  ARMA$(a,c)$ noise, and  a feedback code with knowledge of the noise initial state $S_1=s$, and let $\widehat{S}_{t} \tri {\bf E}\big\{S_{t} | Y^{t-1},S_1=s\big\}$, 
$K_{t}\tri {\bf E}\big\{\big(S_{t} - \widehat{S}_{t}\big)^2|S_1=s \big\},  K_1 = 0,  t=2, \ldots, n.$  The analog of (\ref{cp_11}) and (\ref{cp_in_1}) are given as follows. 
\begin{align}
&X_t =\Lambda_t \Big({S}_{t} - \widehat{S}_{t}\Big) + Z_t,  \hso t=1,\dots,n, \label{Q_1_3_s1_a} \hso X_1 =  Z_1, \\
&Z_t\in  N(0, K_{Z_t}),  \; \mbox{a  Gaussian sequence,} \; t=1, \ldots, n, \label{Q_1_5_s1} \\
&Z_t \hso  \mbox{independent of}  \hso  (V^{t-1},X^{t-1},Y^{t-1}, {S}_1), \label{Q_1_6_s1}\\
&V_t=(c-a) S_{t} +W_t, \hso S_1=s,\hso t=1, \ldots, n,\label{Q_1_3_a_s1_a} \\
&Y_t= X_t + V_t=  {\Lambda}_t \Big({S}_{t} - \widehat{S}_{t}\Big)+Z_t+V_t \label{Q_1_3_a_s1} \\& \hso ={\Lambda}_t \Big({S}_{t} - \widehat{S}_{t}\Big)+\Big(c-a\Big)S_t+W_t+Z_t,  \\ 
&Y_1=Z_1+\Big(c - a\Big) S_1 +W_1, \hso S_1=s, \label{Q_1_4_s1_a}\\
&S_{t+1} = cS_t +W_t \label{stateeq} \hso S_1=s, \hso t = 2, \dots, n,
\end{align}
\begin{align}
&\frac{1}{n}  {\bf E} \Big\{\sum_{t=1}^{n} \big(X_t\big)^2\Big| S_1=s\Big\}=\frac{1}{n}   \sum_{t=1}^n \Big\{\big(\Lambda_t\big)^2  K_{t}  + K_{Z_t} \Big\},     \label{cp_e_ar2_s1}\\
&(\Lambda_t, K_{Z_t})\in (-\infty, \infty) \times [0,\infty) \hso \mbox{scalar,  non-random,}
\end{align}  
Further, $H(Y^n|s)-H(V^n|s)$,     $(\widehat{S}_{t}, K_{t}), t=1, \ldots, n$ are determined  by the generalizedl  time-varying Kalman-filter.\\
{\it Kalman-filter recursion:} 
\begin{align}
&\widehat{S}_{t+1} = c \widehat{S}_{t} + M_t(K_{t}, \Lambda_t, K_{Z_t}) I_t, \hso \widehat{S}_{1}=s, \label{Q_1_8_s1} \\
&I_t \tri Y_t-{\bf E}_{s}\Big\{Y_t\Big|Y^{t-1}\Big\}= Y_t - \Big(c-a \Big) \widehat{S}_{t}, \nonumber \\& I_1= Z_1+W_1,\hso    t=1, \ldots, n, \label{Q_1_9_s1}\\
& \hso = \Big(\Lambda_t +c-a\Big)\Big(S_{t}- \widehat{S}_{t}\Big) + Z_t +W_t,\label{Q_1_9_s1_n} \\
&M_t(K_{t}, \Lambda_t, K_{Z_t})  \tri  \Big( K_{W_t} + c  K_{t}\Big(\Lambda_t + c -a\Big)\Big)\cdot \nonumber \\ & \hst \Big(K_{Z_t}+ K_{W_t} + \Big(\Lambda_t + c -a\Big)^2 K_{t}\Big)^{-1}\label{Q_1_11_s1}
 \end{align}
{\it Error recursion, $E_t\tri S_t-\widehat{S}_t$ :}
 \begin{align} 
&E_{t+1}=F_t(K_{t},{\Lambda}_t,K_{Z_t}) E_{t}-M_t(K_{t},{\Lambda}_t, K_{Z_t})  \Big(Z_t+ W_t\Big)+W_t, \nonumber \\ &  E_1=S_1-\widehat{S_1}=0, \ \ t=2, \ldots, n. \label{i_error_s1} \\
&F_t(K_t, {\Lambda}_t, K_{Z_t}) \tri c - M_t(K_t, {\Lambda}_t, K_{Z_t}) \Big({\Lambda}_t+c-a\Big)
\end{align} 
Entropy of channel output Process:
\begin{align}
&H(Y^n|s)
=\sum_{t=1}^n H(I_t), \hso I_t \hso \mbox{indep. innovation process.}   \label{Q_1_4_s1_aaa}
\end{align}
{\it Generalized time-varying difference Riccati  equation (DRE):}
\beae
K_{t+1}= && c^2 K_{t}  + K_{W} -\frac{ \Big( K_{W} + c K_{t}\big(\Lambda_t + c-a \big)\Big)^2}{ \Big(K_{Z_t}+ K_{W} + \big(\Lambda_t + c -a\big)^2 K_{t}\Big)},  \nonumber\\ &&K_t \geq0, \hso K_1=0, \hso t=2,\dots,n 
 \label{DRE_TV}
\eeae
The characterization of the $n-$FTFI capacity $C_n(\kappa,s)$  is
\begin{align}
C_{n}&(\kappa,s)=
 \sup_{\big(\Lambda_t, K_{Z_t} \big), t=1,\ldots, n: \hso \frac{1}{n} \sum_{t=1}^n \big\{\big(\Lambda_t\big)^2 K_{t}+ K_{Z_t}\big\}   \leq \kappa}\Big\{   \nonumber  \\&\frac{1}{2} \sum_{t=1}^n   \log\Big( \frac{\big(\Lambda_t+c-a\big)^2 K_{t}  + K_{Z_t} +K_{W}}{K_{W}}\Big) \Big\}. \label{IH}
\end{align} 
\end{theorem}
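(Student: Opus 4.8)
The plan is to start from the Cover--Pombra characterization of the $n$-FTFI capacity, $C_n(\kappa,s)=\sup\{H(Y^n|s)-H(V^n|s)\}$ with the supremum over jointly Gaussian feedback inputs of the form (\ref{cp_in_1}), now conditioned on $S_1=s$, and to reduce this infinite-dimensional variational problem to one over the scalar pair $(\Lambda_t,K_{Z_t})_{t=1}^n$ appearing in (\ref{Q_1_3_s1_a}). The proof then splits into three parts: a reduction of the input representation, a Kalman-filtering computation, and a Gaussian-entropy assembly.

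First I would establish the reduction of the input. Since $(X^n,Y^n)$ is jointly Gaussian given $s$ and the ARMA$(a,c)$ noise admits the state realization $V_t=(c-a)S_t+W_t$, $S_{t+1}=cS_t+W_t$ of (\ref{Q_1_3_a_s1_a})--(\ref{stateeq}), the conditional law of the future output depends on the past only through the filtered estimate $\widehat S_t={\bf E}\{S_t\mid Y^{t-1},s\}$. Hence any Cover--Pombra input $X_t=\sum_{j<t}\Gamma_{t,j}V_j+\overline Z_t$ can be split into a component measurable with respect to $Y^{t-1}$ and a component proportional to the state innovation $S_t-\widehat S_t$ plus an independent Gaussian $Z_t$. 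The $Y^{t-1}$-measurable component is known to the decoder and, by translation invariance of differential entropy, leaves $H(Y_t\mid Y^{t-1},s)$ unchanged, hence contributes nothing to the objective; this yields the equivalent parameterization (\ref{Q_1_3_s1_a})--(\ref{Q_1_6_s1}) without loss of optimality. For the detailed span/sufficiency argument I would cite \cite{charalambous-kourtellaris-louka:ARXIV-2020,yang-kavcic-tatikonda2007}.

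Next I would run the scalar Kalman filter for the model (\ref{Q_1_3_a_s1_a})--(\ref{stateeq}) with input (\ref{Q_1_3_s1_a}). Writing $Y_t=(\Lambda_t+c-a)(S_t-\widehat S_t)+(c-a)\widehat S_t+W_t+Z_t$ identifies $(c-a)\widehat S_t={\bf E}\{Y_t\mid Y^{t-1},s\}$, so the innovation is $I_t=Y_t-(c-a)\widehat S_t=(\Lambda_t+c-a)(S_t-\widehat S_t)+W_t+Z_t$ as in (\ref{Q_1_9_s1_n}). Orthogonality of the innovations then yields the Kalman gain (\ref{Q_1_11_s1}), the predictor recursion (\ref{Q_1_8_s1}), the error recursion (\ref{i_error_s1}), and the difference Riccati equation (\ref{DRE_TV}) for $K_t={\bf E}\{(S_t-\widehat S_t)^2\mid s\}$ with $K_1=0$ (since $S_1=s$ is known, $E_1=0$). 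These are routine time-varying scalar Kalman computations and I would not grind through them. Finally I would assemble the objective: because $\{I_t\}$ are mutually independent Gaussian, $H(Y^n\mid s)=\sum_{t=1}^n H(I_t)$ as in (\ref{Q_1_4_s1_aaa}), and by independence of the three terms in (\ref{Q_1_9_s1_n}) the innovation variance is $(\Lambda_t+c-a)^2K_t+K_{Z_t}+K_W$, giving $H(I_t)=\frac12\log\big(2\pi e[(\Lambda_t+c-a)^2K_t+K_{Z_t}+K_W]\big)$. Since the map $W^n\mapsto V^n$ given $s$ is lower-triangular with unit diagonal, it is entropy-preserving, so $H(V^n\mid s)=H(W^n)=\frac n2\log(2\pi e K_W)$. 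Subtracting gives the per-letter terms of (\ref{IH}), and rewriting the power constraint (\ref{AGN_in}) through (\ref{cp_e_ar2_s1}), which follows from ${\bf E}\{X_t^2\mid s\}=\Lambda_t^2K_t+K_{Z_t}$ by independence of $Z_t$ and $S_t-\widehat S_t$, completes the characterization.

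The step I expect to be the main obstacle is the first one: rigorously justifying that the restriction to the state-innovation input (\ref{Q_1_3_s1_a}) is without loss of optimality, i.e., that $\widehat S_t$ is a sufficient statistic so that the general Cover--Pombra input cannot achieve a strictly higher rate. Once this reduction is in place, everything that follows is a mechanical Kalman-filter recursion and a Gaussian differential-entropy calculation.
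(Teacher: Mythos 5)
Your proposal is correct and follows essentially the same route as the paper, whose proof is simply a citation to the companion arXiv paper \cite{charalambous-kourtellaris-louka:ARXIV-2020}: there, as in your sketch, the Cover--Pombra input (\ref{cp_in_1}) is reduced without loss of optimality to the state-innovation form (\ref{Q_1_3_s1_a}) (dropping the $Y^{t-1}$-measurable part, which leaves $H(Y_t\mid Y^{t-1},s)$ invariant while wasting power), after which the generalized Kalman filter with correlated process/measurement noise yields (\ref{Q_1_8_s1})--(\ref{DRE_TV}), and the entropy identities $H(Y^n\mid s)=\sum_t H(I_t)$ and $H(V^n\mid s)=\tfrac{n}{2}\log(2\pi e K_W)$ (via the unit-diagonal triangular map $W^n\mapsto V^n$ given $s$) give (\ref{IH}). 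You correctly identify the sufficiency/WLOG reduction as the only nontrivial step and appropriately defer its details to \cite{charalambous-kourtellaris-louka:ARXIV-2020,yang-kavcic-tatikonda2007}, exactly as the paper itself does.
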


\begin{proof} See \cite{charalambous-kourtellaris-louka:ARXIV-2020}.
\end{proof}

Unlike the Cover and Pombra  problem (\ref{cp_11}) and (\ref{cp_in_1}),  problem (\ref{IH})  is sequential, hence easier to address.

\subsection{Feedback Capacity of Time-Invariant Channel Input Strategies} 
To address Problem~\ref{problem_1}, we restrict  the channel input strategies  to time-invariant, $\Lambda_t=\Lambda^\infty, K_{Z_t}=K_Z^\infty, \forall t$, with corresponding  $X_t=X_t^o, Y_t=Y_t^o, I_t=I_t^o, E_t=E_t^o, K_t=K_t^o\equiv K_t^o(\Lambda^\infty, K_Z^\infty)$. Then we have the following.\\
{\it Generalized time-Invariant DRE:}
\begin{align}
K_{t+1}^o= & c^2 K_{t}^o  + K_{W} -\frac{ \Big( K_{W} + c K_{t}^o\big(\Lambda^\infty + c-a \big)\Big)^2}{ \Big(K_{Z}^\infty+ K_{W} + \big(\Lambda^\infty + c -a\big)^2 K_{t}^o\Big)},  \nonumber\\ 
&K^o_t \geq0, \hso K^o_1=0, \hso t=2,\dots,n.   \label{DRE}
\end{align}
We define feedback capacity $C^{\infty}(\kappa,s)$, as in   \cite{yang-kavcic-tatikonda2007,kim2010}),
\begin{align}
C^{\infty}(\kappa,&s)= \sup_{\stackrel{\Lambda^\infty\in (-\infty, \infty), K_{Z}^\infty\in [0,\infty):}{\lim_{n \longrightarrow \infty} \frac{1}{n} \sum_{t=1}^n \big\{\big(\Lambda^\infty\big)^2 K_{t}^o+ K_{Z}^\infty\big\}   \leq \kappa}} \lim_{n \longrightarrow \infty} \frac{1}{2n} \bigg\{\nonumber
\\  & \sum_{t=1}^n   \log\Big( \frac{\big(\Lambda^\infty+c-a\big)^2 K_{t}^o  + K_{Z}^\infty +K_{W}}{K_{W}}\Big)\bigg\}. \label{IH_ti}
\end{align}
To address  Problem~\ref{problem_1}.(a), we require conditions for convergence  of   DRE  (\ref{DRE}) to  algebraic Riccati equation (ARE),
 \begin{align}
&K^\infty= c^2 K^\infty  + K_{W} \nonumber \\
&-\frac{ \Big( K_{W} + c  K\Big(\Lambda^\infty + c -a\Big)\Big)^2}{ \Big(K_{Z}^\infty+ K_{W} + \Big(\Lambda^\infty + c -a\Big)^2 K^\infty\Big)}, \hst K^\infty\geq 0. \label{DRE_TI_gae} 
\end{align}
 Such conditions require concepts of  detectability and stabilizability  of DREs  \cite{kailath-sayed-hassibi}, as done in  \cite{charalambous-kourtellaris-louka:ARXIV-2020}, using the definitions,
\begin{align} 
&A\tri c, \;\;  C\tri {\Lambda}^\infty + c-a,  \;\; A^* \tri  c-K_W R^{-1}C, \;\; B^{*,\frac{1}{2}} \tri   K_W^{\frac{1}{2}} B^{\frac{1}{2}},\nonumber \\ 
&R \tri  K_Z^\infty + K_W,   \;\;   B \tri  1-K_W\big( K_Z^\infty+K_W\big)^{-1}. \label{notation}
\end{align}
The next notions are directly related to the  asymptotic stability of the error recursion (\ref{i_error_s1}).

\begin{definition}\cite{kailath-sayed-hassibi}\cite{caines1988} Asymptotic stability.\\
A solution $K^\infty\geq 0$ to the generalized ARE (\ref{DRE_TI_gae}), assuming it exists, is called stabilizing if $|F(K^\infty,\Lambda^\infty,K_Z^\infty)|<1$. In this case, we  say $F(K^\infty,\Lambda^\infty,K_Z^\infty)$ is asymptotically stable.
\end{definition} 
\begin{definition}\cite{kailath-sayed-hassibi}\cite{caines1988} \\
\label{def:det-stab}
(a) The pair  $\big\{A,C\big\}$ is called detectable if there exists a $G \in {\mathbb R}$ such that  $|A- G C|<1$ (stable).\\
(b) The pair $ \big\{A^*, B^{*,\frac{1}{2}}\big\}$ is called unit circle controllable if  there exists a $G \in {\mathbb R}$ such that   $|A^*- B^{*,\frac{1}{2}}G|\neq 1$.\\
(c) The pair $\big\{A^*, B^{*,\frac{1}{2}}\big\}$ is called stabilizable if  there exists a $G \in {\mathbb R}$ such that   $|A^*- B^{*,\frac{1}{2}}G|< 1$.
\end{definition}

In the next theorem we collect known results on the convergence of DREs to AREs.

\begin{theorem}  \cite{kailath-sayed-hassibi} and  \cite{isit2020feedback} \\
\label{thm_ric}
Let  $\{K_t^o, t=1, 2, \ldots, n\}$ denote a sequence that satisfies the DRE (\ref{DRE}) with an arbitrary initial condition.\\ 
Then the following hold.\\
(1) Consider the  DRE (\ref{DRE})  with zero initial condition, i.e., $K_{1}^o=0$, and assume,  the pair $\big\{A,C\big\}$ is detectable, and  the pair $\big\{A^*, B^{*,\frac{1}{2}}\big\}$ is unit circle controllable.\\
The  sequence $\{K_{t}^o: t=1, 2, \ldots, n\}$ that satisfies (\ref{DRE}),  with  $K_{1}^o=0$,  converges, $\lim_{n \longrightarrow \infty} K_{n}^o =K^\infty$, where  $K^\infty$ satisfies   the ARE (\ref{DRE_TI_gae}),
 if and only if the pair $\big\{A^*, B^{*,\frac{1}{2}}\big\}$ is stabilizable.\\
(2) Assume,  the pairs, $\big\{A,C\big\}$ is detectable, and  $\big\{A^*, B^{*,\frac{1}{2}}\big\}$ is unit circle controllable.  Then there exists a unique stabilizing solution $K^\infty\geq 0$ to ARE (\ref{DRE}), i.e.,  such that,  $|F(K^\infty,\Lambda^\infty,K_Z^\infty)|<1$, if and only if  $\{A^*, B^{*,\frac{1}{2}}\}$ is stabilizable.\\
(3) If $\{A, C\}$ is detectable and $\{A^*, B^{*,\frac{1}{2}}\}$ is stabilizable,  then any solution $K_{t}^o, t=1, 2, \ldots,n$ to the  DRE (\ref{DRE})  with arbitrary  initial condition, $K_{1}^o$ is such that $\lim_{n \longrightarrow \infty} K_{n}^o =K^\infty$, where $K^\infty\geq 0$ is the  unique solution of  the generalized ARE (\ref{DRE_TI_gae}) with  $|F(K^\infty, \Lambda^\infty,K_Z^\infty)|<1$, i.e., it is stabilizing.
\end{theorem}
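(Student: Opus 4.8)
The plan is to recognize the generalized DRE (\ref{DRE}) and ARE (\ref{DRE_TI_gae}) as the scalar Riccati recursion and Riccati equation governing the prediction-error variance $K_t^o={\bf E}\{(S_t-\widehat{S}_t)^2|s\}$ of the Kalman filter built in Theorem~\ref{thm_FTFIi}, and then to reduce everything to the standard convergence theory for filtering Riccati equations, so that parts (1)--(3) become direct scalar specializations of the matrix results of Kailath--Sayed--Hassibi. The one feature preventing a verbatim citation is that the state noise $W_t$ and the measurement noise $Z_t+W_t$ entering the innovation $I_t$ of (\ref{Q_1_9_s1_n}) are correlated, with cross-covariance $K_W$; this is exactly what forces the modified quantities $A^\ast=c-K_W R^{-1}C$ and $B^{\ast,1/2}=K_W^{1/2}B^{1/2}$ of (\ref{notation}) to appear in place of $A=c$ and $K_W^{1/2}$.

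First I would perform the standard noise-decorrelation: subtracting the multiple $K_W R^{-1}$ of the innovation from the state update replaces the closed-loop map by $A^\ast$ and the effective process-noise variance by $B^\ast=K_W(1-K_W R^{-1})$, turning (\ref{DRE}) into a Riccati recursion with \emph{uncorrelated} noises, driven by $\{A^\ast,B^{\ast,1/2}\}$ on the input side and $\{A,C\}$ on the output side. After this reduction the three claims are the scalar instances of (i) the monotone-convergence theorem for the zero-initialized filtering Riccati equation, (ii) the existence/uniqueness theorem for the stabilizing ARE solution, and (iii) the global (arbitrary-initial-condition) convergence theorem. I would then check that the recursion is order-preserving in $K_t^o$ and, under detectability of $\{A,C\}$, uniformly bounded above, so that the sequence started at $K_1^o=0$ (for which $K_2^o\ge K_1^o$) is monotone nondecreasing and bounded, hence convergent with limit solving (\ref{DRE_TI_gae}); stabilizability of $\{A^\ast,B^{\ast,1/2}\}$ then identifies this limit as the unique root with $|F(K^\infty,\Lambda^\infty,K_Z^\infty)|<1$. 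This yields the forward direction of (1), all of (2), and---by comparing two arbitrary trajectories through the contractive closed-loop map $F$---the global convergence of (3).

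Because the problem is scalar, the hypotheses collapse to explicit conditions I would record separately: detectability of $\{A,C\}$ holds automatically when $C=\Lambda^\infty+c-a\neq0$ (take $G=A/C$, giving $0<1$) and otherwise requires $|c|<1$; stabilizability and unit-circle controllability of $\{A^\ast,B^{\ast,1/2}\}$ hold automatically when $B^{\ast,1/2}\neq0$ (i.e.\ $K_Z^\infty>0$, as $K_W>0$ throughout) and otherwise reduce to a bound on $|A^\ast|=|a-\Lambda^\infty|$ in the degenerate case $K_Z^\infty=0$. Verifying these scalar reductions, and in particular keeping track of the degenerate cases $C=0$ and $B^{\ast,1/2}=0$, is the routine bookkeeping part.

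I expect the main obstacle to be the \emph{necessity} (the ``only if'') direction in part (1): showing that convergence of the zero-initialized DRE to a bona fide solution of the ARE forces $\{A^\ast,B^{\ast,1/2}\}$ to be stabilizable. The clean route is contrapositive---if stabilizability fails, the candidate limit cannot be the stabilizing root, yet under detectability and unit-circle controllability the only admissible limit of the monotone sequence is the stabilizing solution, so convergence to an ARE solution is impossible. Making this dichotomy airtight relies on the Hamiltonian/symplectic characterization of the ARE roots and a careful exclusion of eigenvalues lying exactly on the unit circle, which is precisely why the unit-circle-controllability hypothesis is imposed throughout; correctly carrying the correlated-noise cross term into that eigenvalue analysis, rather than the convergence argument itself, is where I would concentrate the effort.
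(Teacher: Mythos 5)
The paper itself offers no proof of Theorem~\ref{thm_ric}: it is presented as a collection of known results quoted from \cite{kailath-sayed-hassibi} and \cite{isit2020feedback}, so your task is really to reconstruct the cited theory, and your reduction is the right one. Decorrelating the correlated process/measurement noises $(W_t,\, Z_t+W_t)$, whose cross-covariance is $K_W$, replaces $A=c$ by $A^*=c-K_WR^{-1}C$ and the process-noise variance by $B^*=K_W B$, exactly the quantities in (\ref{notation}), after which (\ref{DRE}) is the standard filtering Riccati recursion with uncorrelated noises and parts (1)--(3) are scalar instances of the Kailath--Sayed--Hassibi theory; your scalar reductions of the hypotheses are also correct, and since $K_Z^\infty>0$ forces $B^{*,\frac{1}{2}}\neq 0$ (hence stabilizability automatically), the entire ``if and only if'' content of parts (1)--(2) is concentrated in the degenerate case $K_Z^\infty=0$.

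Precisely there, however, your argument for the necessity direction of part (1) contains a step that fails. You argue by contraposition that if stabilizability fails, then ``the only admissible limit of the monotone sequence is the stabilizing solution, so convergence to an ARE solution is impossible.'' This is false, and it is contradicted by your own monotone-boundedness argument: detectability of $\big\{A,C\big\}$ alone already guarantees that the zero-initialized sequence is nondecreasing and bounded, hence converges to \emph{some} solution of (\ref{DRE_TI_gae}), whether or not $\big\{A^*,B^{*,\frac{1}{2}}\big\}$ is stabilizable. Concretely, take $K_Z^\infty=0$, $C=\Lambda^\infty+c-a\neq0$ (so $\big\{A,C\big\}$ is detectable) and $|A^*|=|a-\Lambda^\infty|>1$ (so $\big\{A^*,0\big\}$ is unit-circle controllable but not stabilizable). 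Then $B^*=0$ and the decorrelated form of (\ref{DRE}) reads $K_{t+1}^o=(A^*)^2K_t^oK_W\big(C^2K_t^o+K_W\big)^{-1}$, so $K_1^o=0$ gives $K_t^o\equiv0$, which trivially converges and satisfies the ARE; its closed loop is $F=A^*$, which is unstable, while the ARE simultaneously possesses a second, \emph{stabilizing} root $K_+=K_W\big((A^*)^2-1\big)C^{-2}>0$ with $F(K_+)=1/A^*$. So ``converges to a solution of the ARE'' cannot by itself separate the stabilizable from the non-stabilizable case: the equivalence in (1) (and in (2)) must be read, as in the cited sources and as the stabilizing-solution phrasing of parts (2)--(3) indicates, in terms of convergence to the \emph{stabilizing} solution, i.e., stability of $F$ at the limit. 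Once stated that way, the Hamiltonian/symplectic machinery you defer to is unnecessary in this scalar setting: necessity in the only nontrivial case is the two-line computation above ($K^\infty=0$ is stabilizing iff $|A^*|<1$ iff $\big\{A^*,0\big\}$ is stabilizable), which is the repair your sketch needs.
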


\begin{remark} At this point we should emphasize that 
to address  Problem~\ref{problem_1}, we need to impose {\it detectability and stabilizability}. This is fundamentally different from   \cite[Theorem~7,  Corollary~7.1]{yang-kavcic-tatikonda2007} and \cite[Theorem~6.1 and Lemma~6.1]{kim2010} (as explain under literature review), where the {\it  stabilizability condition} is not part of the  optimization problems of  \cite{yang-kavcic-tatikonda2007,kim2010}, 
i.e.,  (\ref{ykt_ff})-(\ref{YKT_input_ss}). 
Because of this, our answers are   different from\cite{yang-kavcic-tatikonda2007,kim2010}.
\end{remark}

\subsection{Closed-Form Formulas of Feedback Capacity  of  AGN Channels Driven by Nonstationary Noise}
\label{sect:cor-solu}
Using   Theorem~\ref{thm_ric}, we  address Problem~\ref{problem_1} as stated in  Theorem below. 

\begin{theorem} 
 \label{thmf}
Consider the Problem~\ref{problem_1}. Define the set 
\begin{align}
{\cal P}^\infty \tri & \Big\{(\Lambda^\infty, K_Z^\infty)\in (-\infty, \infty)\times [0,\infty): \nonumber \\
& \mbox{(i) the pair  $\{A, C\}$ is detectable,} \nonumber \\
& \mbox{(ii) the pair $\{A^*, B^{*,\frac{1}{2}}\}$}\;\mbox{is stabilizable}\Big\}. \label{adm_set}
\end{align}
Then, 
\begin{align}
&C^\infty(\kappa) \tri \frac{1}{2} \log\Big( \frac{\big(\Lambda^\infty+c-a\big)^2 K^\infty  + K_{Z}^\infty +K_{W}}{K_{W}}\Big) \label{ll_3}\\
&{\cal P}^\infty(\kappa)\tri \Big\{(\Lambda^\infty, K_Z^\infty)\in {\cal P}^\infty: K_Z^\infty\geq 0, 
\ \big(\Lambda^\infty\big)^2 K^\infty + K_{Z}^\infty \leq \kappa\Big\} \nonumber 
\end{align}
provided there exists $\kappa>0$ such that ${\cal P}^\infty(\kappa)$ is non-empty.  The maximum element $(\Lambda^\infty, K_Z^\infty) \in {\cal P}^\infty(\kappa)$, is such that,  \\(i) if $|c|< 1$, then $(S_t, V_t, X_t, Y_t), t=1, \ldots$ are asymptotic stationary,  and (ii) $(X_t, I_t), t=1, \ldots$ are asymptotic stationary,  $\forall c \in (-\infty,\infty), a \in (-\infty,\infty)$.
\end{theorem}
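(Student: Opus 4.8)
The plan is to reduce the entire statement to the convergence of the Riccati recursion and the contractivity of the error dynamics it produces. First I would invoke Theorem~\ref{thm_ric}(3): since $(\Lambda^\infty,K_Z^\infty)\in{\cal P}^\infty$ enforces detectability of $\{A,C\}$ and stabilizability of $\{A^*,B^{*,\frac{1}{2}}\}$, the DRE (\ref{DRE}) started at $K_1^o=0$ converges, $K_t^o\to K^\infty$, to the \emph{unique stabilizing} solution of the ARE (\ref{DRE_TI_gae}), which by definition satisfies $|F^\infty|<1$, where $F^\infty\tri F(K^\infty,\Lambda^\infty,K_Z^\infty)$; consequently the time-varying gains converge, $F_t\to F^\infty$ and $M_t\to M^\infty\tri M(K^\infty,\Lambda^\infty,K_Z^\infty)$. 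The closed-form formula (\ref{ll_3}) then follows by passing to the Ces\`{a}ro limit in the objective (\ref{IH_ti}): each per-stage term depends on $t$ only through $K_t^o\to K^\infty$, so by Stolz--Ces\`{a}ro the average converges to the single term evaluated at $K^\infty$, and the power constraint converges likewise. Because Theorem~\ref{thm_FTFIi} makes the whole construction conditionally (on $S_1=s$) jointly Gaussian, it suffices throughout to show that the relevant means converge to constants and every relevant covariance converges to a function of the time lag alone; asymptotic wide-sense stationarity then coincides with asymptotic (strict) stationarity.

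The engine of both (i) and (ii) is the asymptotic stationarity of the error process $E_t$ of (\ref{i_error_s1}). Writing the recursion as $E_{t+1}=F_tE_t+N_t$ with $N_t\tri -M_t(Z_t+W_t)+W_t$, I would note that $N_t$ is independent across $t$ and independent of $E_t$, since $E_t$ is a function of $\{Z_j,W_j\}_{j<t}$ and $E_1=0$. Hence ${\bf E}\{E_t\}=0$, $\mathrm{Var}(E_t)=K_t^o\to K^\infty$, and for each fixed lag $k\ge 0$, $\mathrm{Cov}(E_t,E_{t+k})=\big(\prod_{j=t}^{t+k-1}F_j\big)K_t^o\to (F^\infty)^{k}K^\infty$, which depends only on $k$. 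Thus $E_t$ is asymptotically stationary, and crucially this uses only $|F^\infty|<1$, \emph{independently of the value of $c$}.

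For part (ii), I would use $X_t=\Lambda^\infty E_t+Z_t$ from (\ref{Q_1_3_s1_a}) and $I_t=(\Lambda^\infty+c-a)E_t+Z_t+W_t$ from (\ref{Q_1_9_s1_n}). Both have identically zero mean for all $t$ and are \emph{fixed} linear functions of the asymptotically stationary process $E_t$ together with the independent innovations $(Z_t,W_t)$; neither involves $S_t$ itself. Taking limits of the joint covariances of $(X_t,I_t)$ over fixed lags---each reduces to the already-established limits of $\mathrm{Cov}(E_t,E_{t+k})$ plus diagonal noise contributions---shows $(X_t,I_t)$ is jointly asymptotically stationary for all $a,c\in(-\infty,\infty)$. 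This is precisely the decoupling point: although for $|c|\ge 1$ both $S_t$ and $\widehat{S}_t$ diverge, their difference $E_t$ remains stable, so the observable input/innovation pair stays stationary.

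For part (i), when $|c|<1$ the state recursion (\ref{stateeq}) is a stable AR$(1)$, so the transient mean $c^{t-1}s$ of $S_t$ decays to zero and $S_t$ is asymptotically stationary; stacking it with the error gives the constant-coefficient system with system matrix $\mathrm{diag}(c,F^\infty)$ driven by $(W_t,Z_t)$, whose two modes $c$ and $F^\infty$ lie in the open unit disk, so $(S_t,E_t)$ is jointly asymptotically stationary. Then $V_t=(c-a)S_t+W_t$ from (\ref{Q_1_3_a_s1_a}), $Y_t=\Lambda^\infty E_t+(c-a)S_t+Z_t+W_t$, and $X_t$ are fixed linear functions of this jointly stationary pair, hence asymptotically stationary. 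The step I expect to be the main obstacle is the rigorous passage from the \emph{time-varying} Kalman recursion (finite-$t$ gains $F_t,M_t$) to its constant-coefficient limit; the clean resolution is exactly the covariance computation above, where sending $t\to\infty$ first collapses the finite products of $F_j$ to powers of $F^\infty$, so the only substantive input needed is the contraction $|F^\infty|<1$ guaranteed by the stabilizing ARE solution---together with the conceptual care that it is $E_t$, not $S_t$, that must be stabilized for (ii).
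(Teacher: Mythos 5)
Your proposal is correct and follows essentially the same route as the paper, whose entire proof is a one-line appeal to Theorem~\ref{thm_ric}: you invoke the same detectability/stabilizability hypotheses to get convergence of the DRE (\ref{DRE}) to the unique stabilizing solution of the ARE (\ref{DRE_TI_gae}), and your explicit covariance computations (the contraction $|F^\infty|<1$ collapsing lagged products, the decoupling of $E_t$ from the possibly divergent $S_t$, the Ces\`{a}ro passage in (\ref{IH_ti})) simply supply the details the paper leaves implicit, deferring to \cite{charalambous-kourtellaris-louka:ARXIV-2020}. The only cosmetic gloss is that the lagged cross-terms such as $\mathrm{Cov}(Z_t,E_{t+k})$ are not purely ``diagonal noise contributions,'' but they converge to lag-only limits by the same convergence $F_t\to F^\infty$, $M_t\to M^\infty$, so nothing is at stake.
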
 
 \begin{proof}  By Theorem~\ref{thm_ric} the limits in (\ref{IH_ti}) converge to a unique number and $C^\infty(\kappa,s)$ is independent of $s$. 
 \end{proof}

Theorem \ref{theo:eva} is obtained by solving the optimization problem of Theorem~\ref{thmf}.

\begin{theorem}\label{theo:eva} Consider the optimization problem of Theorem~\ref{thmf}.  Feedback increases capacity for the following regions.\\
$\mbox{A) }c\in\big(1,\sqrt{2}\big)\cup\big(\sqrt{2},\infty\big), a\in\left [ \frac{-c}{c^2-2}, \frac{1}{c} \right ]\label{reg.1}\\
\mbox{B) }c\in\big(-\infty,-\sqrt{2}\big)\cup\big(-\sqrt{2},-1\big), a\in\big(-\infty,\frac{1}{c}\big]\cup \big[\frac{-c}{c^2-2},\infty\big)\label{reg.2}
$
provided the power $\kappa$ satisfies \\
$\kappa > \kappa_{min} \tri \frac{K_W\big(1-ac\big)\big(2ac-ac^3-c^2+ \sqrt{c^3\big(a^2c^3-6ac^2+4a+4c^3-3c\big)}\big)}{2c^2\big(c^2-1\big)^2}$\label{kmin}\\
and the value of  feedback capacity $C^\infty(\kappa)$ is
\begin{align}
&C^\infty(\kappa)= \frac{1}{2}  \log\Big( \frac{\big(\Lambda^{\infty,*}+c-a\big)^2 K^{\infty,*}  + K_{Z}^{\infty,*} +K_{W}}{K_{W}}\Big)\label{sol_1}  \\
& \hspace{.9cm}= \frac{1}{2}  \log\Big( \frac{cK_W(c-2a+a^2c)+c^2\kappa (c^2-1)}{K_W(c^2-1)}\Big),\label{sol_12} \\ 
&K^{\infty,*}= \frac{g}{c(c^2-1)(a-c)^2}, \label{case_11} \\
&\Lambda^{\infty,*}= \frac{K_W(a-c)^2(1-a c)}{g}, \label{case_12}   \\
&K_{Z}^{\infty,*}= \frac{ c \kappa(c^2-1) g-K_W^2(a-c)^2(1-a c)^2}{c(c^2-1)g}.\label{case_13}\\
&g = K_W(2a-c+a^2c^3-2a^2c)+c\kappa(c^2-1)^2
\end{align}
\end{theorem}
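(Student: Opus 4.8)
The plan is to reduce the two-variable constrained maximization defining Theorem~\ref{thmf} to a one-dimensional problem and solve it in closed form. The key simplifying observation is that, with the notation $C=\Lambda^\infty+c-a$ and $R=K_Z^\infty+K_W$ of (\ref{notation}), the objective (\ref{ll_3}) is exactly $\frac12\log\big((C^2K^\infty+R)/K_W\big)$, and the quantity $C^2K^\infty+R$ is precisely the denominator appearing in the ARE (\ref{DRE_TI_gae}). So I would first solve the ARE for $K^\infty$: clearing the denominator in (\ref{DRE_TI_gae}) gives a quadratic in $K^\infty$ whose coefficients are rational in $(\Lambda^\infty,K_Z^\infty,c,a,K_W)$, and I would select the stabilizing root, i.e. the one with $|F(K^\infty,\Lambda^\infty,K_Z^\infty)|<1$, whose existence and uniqueness on $\mathcal P^\infty$ is guaranteed by Theorem~\ref{thm_ric}(2). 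This expresses $K^\infty$, and hence the objective, purely in terms of the two decision variables.

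Next I would argue that the power constraint $(\Lambda^\infty)^2K^\infty+K_Z^\infty\le\kappa$ binds at the optimum, since $C^\infty$ is monotone nondecreasing in the available power $\kappa$ (more power only enlarges the feasible set), so at the optimum $(\Lambda^\infty)^2K^\infty+K_Z^\infty=\kappa$. Using this equality to eliminate $K_Z^\infty=\kappa-(\Lambda^\infty)^2K^\infty$ collapses the problem to the single free variable $\Lambda^\infty$, with $K^\infty$ still pinned by the ARE. I would then impose the stationarity condition $\frac{d}{d\Lambda^\infty}C^\infty=0$, computing $dK^\infty/d\Lambda^\infty$ by implicit differentiation of the ARE. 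Solving the resulting rational equation yields the candidate optimizer $\Lambda^{\infty,*}$ of (\ref{case_12}); back-substitution into the ARE and the active constraint produces $K^{\infty,*}$ and $K_Z^{\infty,*}$ of (\ref{case_11}),(\ref{case_13}), and inserting these into the objective gives the closed form (\ref{sol_12}), with the common denominator abbreviated by $g$.

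The feasibility bookkeeping is where the two regions A and B and the threshold $\kappa_{min}$ enter, and this is the part I expect to be the main obstacle. The set $\mathcal P^\infty(\kappa)$ requires detectability of $\{A,C\}$ and stabilizability of $\{A^{*},B^{*,\frac{1}{2}}\}$ from Definition~\ref{def:det-stab}, together with $K^\infty\ge0$ and $K_Z^\infty\ge0$. Translating the stability inequalities $|A-GC|<1$ and $|A^{*}-B^{*,\frac{1}{2}}G|<1$ into explicit inequalities on $(a,c)$, using $A=c$, $C=\Lambda^{\infty,*}+c-a$ and (\ref{notation}), delineates the parameter windows $c\in(1,\sqrt2)\cup(\sqrt2,\infty)$ of region A and $c\in(-\infty,-\sqrt2)\cup(-\sqrt2,-1)$ of region B, together with the corresponding $a$-intervals; the exclusion of $c=\pm\sqrt2$ is forced by the pole of the endpoint $-c/(c^2-2)$ of the admissible $a$-interval. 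The remaining constraint $K_Z^{\infty,*}\ge0$ in (\ref{case_13}), i.e. $c\kappa(c^2-1)g\ge K_W^2(a-c)^2(1-ac)^2$, is what forces $\kappa\ge\kappa_{min}$: solving this inequality for $\kappa$ (it becomes quadratic once $g$ is expanded) produces the boundary (\ref{kmin}), whose square-root discriminant is exactly the fingerprint of that quadratic.

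Finally, to conclude that feedback \emph{strictly} increases capacity on A and B, I would verify that there $\Lambda^{\infty,*}\ne0$ and that the stationary point is a genuine interior maximum (checking the second-order condition, or comparing against the two boundary alternatives $\Lambda^\infty=0$ and $K_Z^\infty=0$), so that the feedback term strictly improves over the IID input; outside A and B the stationary point leaves $\mathcal P^\infty(\kappa)$ and the optimum degenerates to the non-feedback choice. The hard part throughout is the symbolic algebra: solving the quadratic ARE, differentiating it implicitly, and then reducing the stability and nonnegativity conditions to the stated closed-form regions and to $\kappa_{min}$ without losing track of the sign cases driven by $\mathrm{sign}(c)$ and $\mathrm{sign}(c^2-1)$.
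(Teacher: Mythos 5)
The paper itself gives no in-text proof of this theorem: it says only that ``the solution is obtained by a method similar to \cite{ckl2019}'', i.e.\ a Lagrangian/stationary-point analysis of the ARE-constrained optimization. Your outline matches that route in its main mechanics: solving the quadratic ARE (\ref{DRE_TI_gae}) for the stabilizing root (justified by Theorem~\ref{thm_ric}(2)), arguing the power constraint binds, eliminating $K_Z^\infty$, setting the derivative in $\Lambda^\infty$ to zero with implicit differentiation of the ARE, and back-substituting to get (\ref{case_11})--(\ref{case_13}) and (\ref{sol_12}). Your identification of $\kappa_{min}$ as the positive root of the quadratic in $\kappa$ produced by $K_Z^{\infty,*}\ge 0$, i.e.\ $c\kappa(c^2-1)g \ge K_W^2(a-c)^2(1-ac)^2$ with $g$ itself affine in $\kappa$, is also the right mechanism, and is consistent with the endpoint behavior (at $a=1/c$ one gets $\Lambda^{\infty,*}=0$ and $\kappa_{min}=0$, so the feedback gain vanishes exactly at that boundary).

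There is, however, one concrete mis-step: you derive the regions A and B by ``translating the stability inequalities $|A-GC|<1$ and $|A^*-B^{*,\frac{1}{2}}G|<1$ into explicit inequalities on $(a,c)$.'' In this scalar problem that translation is nearly vacuous and cannot produce those regions. Since $G\in\mathbb{R}$ is free, $\{A,C\}$ is detectable whenever $C=\Lambda^\infty+c-a\neq 0$ (take $G=A/C$), failing only if $C=0$ and $|c|\ge 1$; and since $B^{*,\frac{1}{2}}=\big(K_WK_Z^\infty/(K_Z^\infty+K_W)\big)^{1/2}\neq 0$ whenever $K_Z^\infty>0$, stabilizability of $\{A^*,B^{*,\frac{1}{2}}\}$ is automatic there --- it only has bite on the boundary $K_Z^\infty=0$, where it reduces to $|a-\Lambda^\infty|<1$ (this is precisely what separates the present paper from the Kim/YKT solutions, which take $K_Z^*=0$). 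Consequently the windows $c\in(1,\sqrt{2})\cup(\sqrt{2},\infty)$, $c\in(-\infty,-\sqrt{2})\cup(-\sqrt{2},-1)$ and the $a$-intervals with endpoints $1/c$ and $-c/(c^2-2)$ must instead come out of the feasibility and optimality bookkeeping of the stationary point itself: $K^{\infty,*}\ge 0$, i.e.\ the sign condition $c(c^2-1)g>0$ in (\ref{case_11}); realness of the square root in $\kappa_{min}$, i.e.\ $c^3(a^2c^3-6ac^2+4a+4c^3-3c)\ge 0$; and the comparison showing the feedback solution strictly beats the $\Lambda^\infty=0$ alternative (your final paragraph gestures at this, but for the complement regions it is the failure of these algebraic conditions, not of Definition~\ref{def:det-stab}, that kills the feedback solution). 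With that step redirected, the rest of your plan is the same symbolic program the paper delegates to \cite{ckl2019}.
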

\begin{proof} 
The solution is obtained by  a method similar to \cite{ckl2019}.
\end{proof}

The discussion, conclusions and Figures of Section~\ref{sect:results}, related to feedback capacity, are based on Theorem~\ref{theo:eva}.\\
For the  complements of Regimes A and B of Theorem~\ref{theo:eva}, or $\kappa \leq  \kappa_{min}$, there does not exist feedback strategy.  However, we can show that we  can always pick $\Lambda^{\infty}=0$ and ensure a nonfeedback achievable rate.

\subsection{Nonfeedback Achievable Rates  of  IID Channel Input Processes}
Letting $\Lambda^{\infty}=0$, in (\ref{IH_ti}),  the   channel input reduces to  an independent innovation process $X^o_t=Z^o_t,\; t=1,\dots,n$, and hence the code does not use feedback. For such an input  the detectability and stabilizibility conditions are always satisfied, and we obtain  a nonfeedback  achievable rare,  as stated in   the next theorem.

\begin{theorem}
\label{thmnf}
Consider (\ref{IH_ti}), with  $\Lambda^\infty=0$.  and  
An achievable lower bound on nonfeedback capacity is,
\begin{align}
&C_{LB}^{\infty,nfb}(\kappa)\tri \frac{1}{2} \log\Big( \frac{(c-a)^2 K^\infty  + \kappa +K_{W}}{K_{W}}\Big), \hso \forall \kappa \in (0,\infty) \label{nf_888}
\end{align}
where  $K^{\infty}$ is the unique and stabilizable solution  of (\ref{DRE_TI_gae}), with $K_Z^{\infty}=\kappa, \Lambda^\infty=0$, given by 
\begin{align*}
&K^{\infty}= \frac{- h + \sqrt{h^2+4(c-a)^2 K_W \kappa}}{2(c-a)^2} \geq 0\\
&h=\kappa\big(1-c^2) +K_W(1-a^2).
\end{align*}
\end{theorem}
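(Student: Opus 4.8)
The plan is to specialize the time-invariant characterization (\ref{IH_ti}) to $\Lambda^\infty=0$, check that Theorem~\ref{thm_ric} applies unconditionally in this case, solve the resulting algebraic Riccati equation in closed form, and interpret the value as an achievable nonfeedback rate. First I would set $\Lambda^\infty=0$ in the input representation (\ref{Q_1_3_s1_a}), so that $X_t^o=Z_t^o$ with $Z_t^o$ an i.i.d. Gaussian sequence independent of $(V^{t-1},X^{t-1},Y^{t-1},S_1)$; in particular $X_t^o$ does not depend on the feedback $Y^{t-1}$, so the code is nonfeedback. The power constraint in (\ref{IH_ti}) then collapses to $K_Z^\infty\le\kappa$, and I would take $K_Z^\infty=\kappa$ to expend the full power budget.

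The second step is to confirm that the detectability and stabilizability hypotheses of Theorem~\ref{thm_ric} hold for every admissible $(a,c,\kappa)$. Using the notation (\ref{notation}) with $\Lambda^\infty=0$ one has $A=c$ and $C=c-a$, which is nonzero because $c\ne a$ by hypothesis; hence choosing $G=c/(c-a)$ gives $|A-GC|=0<1$, so $\{A,C\}$ is detectable. Likewise $R=\kappa+K_W>0$ and $B=\kappa/(\kappa+K_W)>0$ for $\kappa>0$, so $B^{*,\frac{1}{2}}=K_W^{\frac{1}{2}}B^{\frac{1}{2}}\ne 0$; choosing $G=A^*/B^{*,\frac{1}{2}}$ gives $|A^*-B^{*,\frac{1}{2}}G|=0$, which is both $<1$ and $\ne 1$, so $\{A^*,B^{*,\frac{1}{2}}\}$ is simultaneously stabilizable and unit-circle controllable. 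Theorem~\ref{thm_ric}, parts (1)--(3), then guarantees that the DRE (\ref{DRE}) with $K_1^o=0$ converges to the unique stabilizing solution $K^\infty\ge 0$ of the ARE (\ref{DRE_TI_gae}).

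The third step is to solve that ARE. Setting $\Lambda^\infty=0$, $K_Z^\infty=\kappa$ in (\ref{DRE_TI_gae}) and clearing the denominator $\kappa+K_W+(c-a)^2K^\infty$, I would expand both sides: the $(K^\infty)^2$ terms combine through $(1-c^2)+c^2=1$ to leave coefficient $(c-a)^2$, the constant terms reduce to $-K_W\kappa$, and the linear coefficient collapses, after using $(1-c^2)-(c-a)^2+2c(c-a)=1-a^2$, to exactly $h=\kappa(1-c^2)+K_W(1-a^2)$. This yields the quadratic $(c-a)^2(K^\infty)^2+hK^\infty-K_W\kappa=0$, whose discriminant $h^2+4(c-a)^2K_W\kappa$ is positive; since the product of the roots equals $-K_W\kappa/(c-a)^2<0$, exactly one root is nonnegative, namely the root with the $+$ sign, which reproduces the closed form stated in the theorem.

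Finally, substituting $K_Z^\infty=\kappa$ and this $K^\infty$ into the per-stage objective of (\ref{IH_ti}), whose summand is constant in $t$ in the steady state once $K_t^o\to K^\infty$, produces (\ref{nf_888}). Because this value is attained by the particular nonfeedback input $X_t=Z_t$ with $Z_t\in N(0,\kappa)$, it is achievable without feedback and is therefore a lower bound on the nonfeedback capacity, the latter being a supremum over all (possibly correlated) nonfeedback Gaussian inputs. I expect the only mildly delicate point to be the algebraic collapse of the linear coefficient to $h$; the Riccati-convergence step is routine here precisely because $\Lambda^\infty=0$ trivially forces $C=c-a\ne 0$ and $B^{*,\frac{1}{2}}\ne 0$.
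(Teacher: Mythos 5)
Your proposal is correct and follows exactly the route the paper intends: the paper's proof is only the remark ``this is straightforward to show,'' with the surrounding text indicating precisely your steps (set $\Lambda^\infty=0$ so the input is the IID innovations process, note that detectability and stabilizability hold automatically, and evaluate the limit at the ARE solution). Your algebra checks out --- the linear coefficient does collapse to $h=\kappa(1-c^2)+K_W(1-a^2)$, and the sign analysis of the quadratic $(c-a)^2(K^\infty)^2+hK^\infty-K_W\kappa=0$ correctly identifies the unique nonnegative root as the stabilizing solution guaranteed by Theorem~\ref{thm_ric} --- so you have simply supplied the details the paper omits.
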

\begin{proof}
This is straightforward to show. 
\end{proof}

%
\section{Acknowledgements}
This work was supported in parts by  the European Regional Development Fund and
 the Republic of Cyprus through the Research Promotion Foundation Projects EXCELLENCE/1216/0365 
and  EXCELLENCE/1216/0296.

\section{Conclusion}\label{sec:con}
In this paper, we characterized and derived closed form   expressions of  feedback capacity  and  achievable  lower bounds on nonfeedback rates, for AGN channels driven by ARMA$(a,c), a \in (-\infty,\infty), c\in (-\infty,\infty),  c \neq a$ noise,  when channel input strategies or distributions are time-invariant. Simulations showed  that the more unstable the noise the higher the feedback capacity, and the  achievable lower bounds on nonfeedback rates.

\newpage

\bibliographystyle{IEEEtran}
\bibliography{Bibliography_capacity}



\end{document}